\newtheorem{theorem}{Theorem}
\newtheorem{lemma}[theorem]{Lemma}
\newtheorem{conjecture}[theorem]{Conjecture}
\newtheorem{obs}[theorem]{Observation}
\title{Brooks' theorem on powers of graphs\thanks{The authors are supported by the ANR Grant EGOS (2012-2015) 12 JS02 002 01.}}
\author{
Marthe Bonamy \\
{\tt marthe.bonamy@lirmm.fr}
\and
Nicolas Bousquet \\
{\tt nicolas.bousquet@lirmm.fr}
}
\begin{document}

\maketitle

\begin{abstract}
We prove that for $k\geq 3$, the bound given by Brooks' theorem on the chromatic number of $k$-th powers of graphs of maximum degree $\Delta \geq 3$ can be lowered by $1$, even in the case of online list coloring.
\end{abstract}

\section{Introduction}
\label{sec:intro}
A graph $G=(V,E)$ is \textit{$k$-colorable} if there is a way to color each vertex with an element of $\{1,\cdots,k\}$ so that no two adjacent vertices receive distinct colors. A generalization of $k$-colorability is \textit{list $k$-colorability} (or \textit{$k$-choosability}), introduced independently by Vizing~\cite{v76} and Erd\H{o}s et al.~\cite{ert80}. The graph $G$ is $k$-choosable if for every assignment of $k$ colors to each vertex  in $V$, there is a way to color each vertex with an element of its assigned $k$ colors so that no two adjacent vertices have the same color.

Let $\Delta \geq 3$. Unless specified otherwise, the graphs considered here are simple, connected and their maximum degree is $\Delta$. Jointly with the assumption that $\Delta \geq 3$, this means for example that none of the graphs we consider is a cycle. We recall the following seminal Brooks-like theorem on choosability.

\begin{theorem}\cite{ert80}\label{th:brooks}
Except for cliques, every graph is $\Delta$-choosable.
\end{theorem}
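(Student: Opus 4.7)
The plan is to adapt the classical proof of Brooks' theorem, constructing a well-chosen vertex ordering and coloring greedily along it, with a small twist at the end that handles lists rather than a single palette. I would split the argument into two cases depending on whether $G$ is $\Delta$-regular.

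If $G$ is not $\Delta$-regular, pick a vertex $v$ with $\deg(v) < \Delta$, perform a BFS from $v$, and color the vertices in reverse BFS order. Each vertex other than $v$ has its BFS parent later in the order, so at most $\Delta - 1$ of its neighbors are already colored at the moment it is processed, leaving at least one color available in its list of size $\Delta$. The vertex $v$ itself has degree at most $\Delta - 1$, so an available color in $L(v)$ remains.

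If $G$ is $\Delta$-regular and not a clique, the heart of the proof is the classical structural lemma: there exist vertices $u_1, u_2, u_n$ with $u_1 u_n, u_2 u_n \in E(G)$, $u_1 u_2 \notin E(G)$, and $G \setminus \{u_1, u_2\}$ connected. Establishing this is the main obstacle; it requires a case analysis based on the connectivity of $G$, handling separately the situations where $G$ is $3$-connected, has a separating pair but no cut vertex, or has a cut vertex (through the block tree). Granting the lemma, I would order the remaining vertices as $u_3, \ldots, u_{n-1}$ by a reverse BFS from $u_n$ in $G \setminus \{u_1, u_2\}$, so that each $u_i$ with $3 \le i \le n-1$ has at least one later neighbor (its BFS parent), hence at most $\Delta - 1$ earlier neighbors.

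The list-coloring twist then concerns how to color $u_1$ and $u_2$ first so that a color survives for $u_n$. If $L(u_1) \cap L(u_2) \neq \emptyset$, color both vertices with a common color; since $u_1 u_2 \notin E(G)$ this is valid, and two of $u_n$'s neighbors now share a color, leaving at most $\Delta - 1$ distinct forbidden colors in $L(u_n)$. Otherwise $L(u_1)$ and $L(u_2)$ are disjoint, hence $|L(u_1) \cup L(u_2)| = 2\Delta > |L(u_n)|$, so at least one of $L(u_1), L(u_2)$, say $L(u_1)$, contains a color $c \notin L(u_n)$; color $u_1$ with $c$ and $u_2$ with any color from $L(u_2)$, so that $u_1$'s color is not forbidden at $u_n$. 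In either case, $u_n$ will still have an available color when its turn comes, and the vertices $u_3, \ldots, u_{n-1}$ are coloured greedily along the ordering, each from a list of size $\Delta$ against at most $\Delta - 1$ previously used colors.
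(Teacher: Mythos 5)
First, note that the paper does not prove this statement: it is quoted from Erd\H{o}s--Rubin--Taylor, so there is no in-paper proof to compare against, and your argument has to stand on its own. Your non-regular case is correct, and in the $2$-connected regular case your plan is the standard Lov\'asz-style argument; the twist you add for lists (color $u_1,u_2$ with a common color if $L(u_1)\cap L(u_2)\neq\emptyset$, and otherwise use $|L(u_1)\cup L(u_2)|=2\Delta>|L(u_n)|$ to place on $u_1$ a color outside $L(u_n)$) is exactly the right adaptation.

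The gap is in the cut-vertex case: the structural lemma as you state it is \emph{false} for $\Delta$-regular connected non-complete graphs that have a cut vertex. Concretely, let $H$ be $K_4$ on $\{a,b,c,d\}$ with the edge $ab$ subdivided by a new vertex $w$, take two disjoint copies $H_1,H_2$, and add the edge $w_1w_2$. The resulting graph $G$ is $3$-regular, connected and not complete, but for \emph{every} vertex $u_n$ and \emph{every} nonadjacent pair $u_1,u_2$ of its neighbors, $G\setminus\{u_1,u_2\}$ is disconnected: the only such pairs are $\{a_i,b_i\}$ (whose removal isolates $\{c_i,d_i\}$) and pairs containing a cut vertex $w_i$ (whose removal severs the other copy). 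So there is no triple to start from. The reason the classical proof survives this in the ordinary-coloring setting is the block-tree reduction: one $\Delta$-colors each block separately and then \emph{permutes the palette} of each block so that the colorings agree at the cut vertices. With lists there is no common palette to permute, so that reduction breaks, and this is precisely where the real content of the Erd\H{o}s--Rubin--Taylor theorem lies: they prove by induction on blocks that a connected graph is degree-choosable unless every block is a clique or an odd cycle (a ``Gallai tree''), handling a cut vertex by coloring one side first and extending across it using the degree slack the cut vertex has inside each side. Your proposal needs either to restrict the lemma to the $2$-connected case and supply a genuinely different argument for graphs with cut vertices, or to run that block induction explicitly; as written, ``through the block tree'' hides the one step that does not carry over from Brooks' theorem to choosability.
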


The square $G^2$ of a graph $G=(V,E)$ is the graph obtained from $G$ by adding all edges between vertices that have  a common neighbor. Note that $\Delta(G^2)\leq \Delta^2$, so Theorem~\ref{th:brooks} implies that if $G^2$ is not a clique on $\Delta^2+1$ vertices, then $G^2$ is $\Delta^2$-choosable. In the case $\Delta=3$, Theorem~\ref{th:brooks} ensures that the square of any graph is $9$-colorable unless it is a clique. Cranston and Kim~\cite{ck08} improved this result and conjectured that it is also true for every $\Delta$. 

\begin{theorem}\cite{ck08}\label{th:cranstonkim}
Except for the Petersen graph, the square of any subcubic graph is $8$-choosable.
\end{theorem}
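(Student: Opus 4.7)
The plan is to prove Theorem~\ref{th:cranstonkim} by induction on $|V(G)|$, via a minimum counterexample: a subcubic $G \neq \text{Petersen}$ together with an $8$-list assignment $L$ for which $G^2$ admits no proper $L$-coloring, with $|V(G)|$ smallest.

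I would first reduce to the case $\Delta(G^2)=9$. Since $\Delta(G^2)\le\Delta(G)^2\le 9$, Theorem~\ref{th:brooks} closes the easier cases. If $\Delta(G^2)\le 7$, then $G^2$ is $7$-choosable unless it is a clique on at most $8$ vertices (itself $8$-choosable). If $\Delta(G^2)=8$, then $G^2$ is $8$-choosable unless $G^2=K_9$; but $G^2=K_9$ would force every vertex $v$ of $G$ to have $\deg_G(v)=3$ (since $|N_{G^2}[v]|\le 1+3\deg_G(v)$), making $G$ a $3$-regular graph on $9$ vertices, which is impossible because $\sum_v \deg_G(v)$ would be odd.

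With $\Delta(G^2)=9$, some vertex of $G$ has its closed $2$-neighborhood of size $10$. I would then derive structural constraints by reducibility. A vertex $u$ with $\deg_G(u)\le 2$ has $\deg_{G^2}(u)\le 6$, so after a proper $L$-coloring of $G-u$ (which exists by minimality, provided $G-u\ne\text{Petersen}$, a subcase easy to handle separately), at least two colors in $L(u)$ remain free and the coloring extends. Analogous arguments—delete a small configuration, color the remainder by induction, then extend one vertex at a time—rule out triangles, $4$-cycles, and \emph{diamonds} (pairs of vertices sharing two common neighbors). Hence $G$ must be $3$-regular, have girth at least $5$, and be diamond-free.

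Finally I would split on diameter. Since $G$ is $3$-regular of girth $\ge 5$, the Moore bound gives $|V(G)|\ge 10$; if additionally $\mathrm{diam}(G)\le 2$, the Moore bound also gives $|V(G)|\le 10$, and uniqueness forces $G=\text{Petersen}$, a contradiction. Otherwise $\mathrm{diam}(G)\ge 3$, and one picks a shortest path $u_0u_1u_2u_3$ in $G$ together with a small subgraph $H$ around it whose removal leaves $G-H$ non-Petersen; a proper $L$-coloring of $(G-H)^2$ exists by minimality and extends to $H$, since the vertices of $H$ are sparse enough in $G^2$ to retain unused colors in their $8$-lists. The main obstacle is precisely this last step: producing a reducible configuration that is \emph{forced} in every $3$-regular, girth-$\ge 5$, diamond-free, non-Petersen graph of diameter $\ge 3$, and doing so uniformly over all list assignments $L$. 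This typically calls for a discharging argument on the vertices of $G$, plus careful bookkeeping to avoid accidentally producing a Petersen graph as the reduced instance; the bulk of the technical effort lies here.
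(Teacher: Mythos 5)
First, a point of order: the paper does not prove Theorem~\ref{th:cranstonkim}; it is quoted from Cranston and Kim~\cite{ck08} as background, so there is no in-paper proof to compare yours against. Judged on its own terms, your proposal is a plausible plan whose preliminary reductions are sound: using the choosability version of Brooks' theorem to reduce to $\Delta(G^2)=9$, the parity argument ruling out $G^2=K_9$, and the Moore-bound dichotomy identifying the Petersen graph as the unique $3$-regular, girth-$5$, diameter-$2$ case are all correct. But the proposal has two genuine gaps, only one of which you acknowledge.

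The unacknowledged gap infects every ``delete and extend'' step. When you remove a vertex $u$ (or a configuration $H$) and invoke minimality to list-color $(G-u)^2$, the resulting coloring need not be a proper partial coloring of $G^2$: two neighbors $a,b$ of $u$ are adjacent in $G^2$ via the path through $u$, but may be far apart in $G-u$, so a proper coloring of $(G-u)^2$ may legitimately assign them the same color, and then no choice of color for $u$ repairs the conflict. The graph you actually need to color by induction is $G^2[V\setminus\{u\}]$, which is generally not the square of a subcubic graph, so the induction hypothesis does not apply to it. This is fixable (add edges among the neighbors of the deleted set before squaring, or strengthen the induction statement), but every fix spawns new cases (the modified graph may fail to be subcubic, may disconnect, may become the Petersen graph), and your sketch addresses none of this. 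The second gap is the one you flag yourself: in the main case ($3$-regular, girth at least $5$, diameter at least $3$) you assert that a reducible configuration must exist and that ``the bulk of the technical effort lies here.'' That deferred step is precisely the content of the theorem. It is worth noting that the present paper's machinery, which is structurally very close to your plan (order the vertices by decreasing distance to a chosen pair $v,w$ and pre-treat four special vertices $x_1,x_2,y_1,y_2$), is explicitly restricted to $k\ge 3$: the proof of Lemma~\ref{lem:x1y1} closes by exhibiting a $4$-cycle and contradicting $k\ge 3$, and Lemma~\ref{lem:glob} only excludes cycles of length less than $2k$, which for $k=2$ gives no contradiction from a $4$-cycle. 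The step you defer is exactly the obstruction that led the authors to exclude $k=2$ and cite \cite{ck08} and \cite{cr13} instead, so it cannot be waved away as routine.
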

\emph{Moore graphs} are graphs on $\Delta^2+1$ vertices whose square is a clique~\cite{ms05}. The Petersen graph is the unique Moore graph with $\Delta=3$.
\begin{conjecture}\cite{ck08}\label{conj:cranstonkim}
Except for Moore graphs, the square of any graph is $(\Delta^2-1)$-choosable.
\end{conjecture}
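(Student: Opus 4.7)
The plan is to apply Brooks' theorem for choosability (Theorem~\ref{th:brooks}) to $G^2$ and then shave one more colour in the residual cases. Because $\Delta(G^2)\le\Delta^2$, Theorem~\ref{th:brooks} already yields $(\Delta^2-1)$-choosability whenever $\Delta(G^2)\le\Delta^2-1$, save for the degenerate situation in which some component of $G^2$ is a clique on $\Delta^2$ vertices, a configuration that one checks cannot occur under the non-Moore assumption. The real fight is in the regime $\Delta(G^2)=\Delta^2$, where Theorem~\ref{th:brooks} only gives $\Delta^2$-choosability and a full additional colour must be recovered by structural means.

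I would start by unpacking what $\deg_{G^2}(v)=\Delta^2$ forces: $v$ has degree $\Delta$, each of its $G$-neighbours has degree $\Delta$, and no triangle or $4$-cycle passes through $v$. If this tightness fails at some vertex $v^{*}$ (so that $\deg_{G^2}(v^{*})\le\Delta^2-1$), I would use standard Brooks-style machinery on $G^2$ — a block decomposition or a reverse BFS from $v^{*}$ paired with a local recolouring at the single tight vertex — to complete a colouring from lists of size $\Delta^2-1$. Otherwise $G$ is $\Delta$-regular of girth at least $5$; the Moore lower bound then forces $|V(G)|>\Delta^2+1$ (equality characterises Moore graphs, excluded here) and hence $G$ has diameter at least $3$. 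I would then pick two vertices $x,y$ at distance exactly $3$ in $G$, so that $xy\notin E(G^2)$, and attempt to identify them: if $L(x)\cap L(y)\neq\emptyset$, colour $x$ and $y$ identically and induct on the reduced instance; if the lists are disjoint, use Kempe swaps along a short $x$-$y$ path in $G$ to manufacture the required overlap.

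The main obstacle, and the reason the conjecture has resisted proof for over a decade, is precisely this last step. In the tight girth-$\ge 5$ regular regime, every vertex of $G^2$ already has degree exactly $\Delta^2$, so identifying $x$ and $y$ only relaxes degrees inside their joint $G^2$-neighbourhood; propagating the single saved colour through the rest of $G^2$ is delicate, because far-away vertices still face the full list-size-versus-degree tightness, and the Kempe swap itself can propagate uncontrollably in a regular graph of large girth. A successful argument will almost certainly have to marry the identification idea with a discharging scheme on $G$ choosing $(x,y)$ so that the combinatorial slack in their common neighbourhood suffices — the same style of effort Cranston and Kim carried out for $\Delta=3$ (Theorem~\ref{th:cranstonkim}). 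This is exactly the bottleneck that the present paper circumvents by restricting to $k$-th powers with $k\ge 3$, where the larger local neighbourhoods afford room for a more direct argument.
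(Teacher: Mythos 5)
You should first note that the statement you were asked to prove is a \emph{conjecture} quoted from Cranston and Kim, and this paper does not prove it: it records that the $k=2$ case was settled separately by Cranston and Rabern~\cite{cr13} and proves only the cases $k\geq 3$, where the argument genuinely uses $k\geq 3$ (for instance $M=D(k,\Delta)\geq 21$, and the existence of three consecutive vertices $z_1,z_2,z_3$ beyond $u$ on a shortest path, which needs $d(u,v)\geq k\geq 3$). So there is no in-paper proof to compare against, and your text has to stand on its own as a proof of the $k=2$ statement.

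It does not stand: it is a research plan with an explicitly unfilled hole, not a proof --- you say yourself that ``the main obstacle \dots is precisely this last step.'' Two concrete gaps. First, in the slack regime $\Delta(G^2)\leq\Delta^2-1$, Theorem~\ref{th:brooks} applied to $G^2$ leaves open the case where $G^2$ is a clique on exactly $\Delta^2$ vertices, which is \emph{not} $(\Delta^2-1)$-choosable; the Moore bound only excludes $\Delta^2+1$ vertices, so ruling this out requires the nontrivial classification of diameter-$2$ graphs with defect one (regularity is forced by a counting argument, and then one needs the Erd\H{o}s--Fajtlowicz--Hoffman result that no such $\Delta$-regular graph exists for $\Delta\geq 3$); ``one checks'' is not an argument. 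Second, and fatally, in the tight regime ($\Delta$-regular, girth at least $5$, every vertex of $G^2$-degree exactly $\Delta^2$), identifying two vertices $x,y$ at distance $3$ only creates slack inside $N_{G^2}(x)\cap N_{G^2}(y)$; every other vertex still faces lists of size $\Delta^2-1$ against $\Delta^2$ neighbours, and you provide no mechanism (ordering, degeneracy, or kernel/Alon--Tarsi argument) to propagate the single saved colour. The ``Kempe swaps along a short $x$--$y$ path'' step is not even defined in the list setting, since $L(x)$ and $L(y)$ being disjoint gives no distinguished colour pair to swap, and you acknowledge the swap can cascade uncontrollably. Since the decisive step is left as an open problem, the proposal does not establish the conjecture.
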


The \emph{distance} between two vertices $u$ and $v$ in $G$ is the length of a shortest path between them. A generalization of the square of a graph is the $k^{th}$-power of a graph, for $k \in \mathbb{N}^*$. The \emph{$k^{th}$-power} of $G$ is obtained from $G$ by adding all edges between vertices at distance at most $k$. We denote $D(k,\Delta)$ the greatest maximum degree of a $k^{th}$-power of a graph of maximum degree $\Delta$. It basically corresponds to the maximum degree of the $k^{th}$-power of a deep enough $\Delta$-regular tree, and more precisely: \[ D(k,\Delta)=\Delta \times \sum_{i=1}^{k} (\Delta-1)^{i-1}=\Delta \times \frac{(\Delta-1)^k-1}{\Delta-2} \]
Note that $D(2,\Delta)=\Delta^2$, hence the following generalizes Conjecture~\ref{conj:cranstonkim}:
\begin{conjecture}\cite{mf13}\label{conj:gen}
For any $k \in \mathbb{N}^*$, except for Moore graphs when $k=2$, the $k^{th}$ power of any graph is $(D(k,\Delta)-1)$-choosable.
\end{conjecture}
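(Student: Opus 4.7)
The plan is to establish the statement in the range $k\geq 3$ (the case treated in the paper, where the Moore-graph exception becomes vacuous since no Moore graphs of diameter $\geq 3$ and degree $\Delta\geq 3$ exist). My approach is to apply Brooks' theorem (Theorem~\ref{th:brooks}) to $G^k$ and then to save one color at any vertex whose $k$-ball in $G$ realizes the extremal size $D(k,\Delta)+1$.

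First, I would reduce to the hard case. If $\Delta(G^k)\leq D(k,\Delta)-1$ and $G^k$ is not a clique, Theorem~\ref{th:brooks} applied to $G^k$ yields $(D(k,\Delta)-1)$-choosability directly; the only obstructive clique would have order at least $D(k,\Delta)$, which cannot occur for $k\geq 3$ and $\Delta\geq 3$. Hence I may assume there exists a vertex $v$ with $\deg_{G^k}(v)=D(k,\Delta)$. Combinatorially, this forces the ball $B_k(v)$ to span in $G$ a $\Delta$-regular rooted tree of depth $k$: every vertex at $G$-depth $i<k$ from $v$ has its parent at depth $i-1$ and $\Delta-1$ distinct children at depth $i+1$, with no horizontal or back edges inside the ball.

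Next, I would locate two vertices $u,w\in N_{G^k}(v)$ with $d_G(u,w)>k$, so that $u$ and $w$ are not $G^k$-adjacent. For $k\geq 3$, the natural candidates are two depth-$k$ leaves of $B_k(v)$ sitting in distinct root-subtrees: their tree-path through $v$ has length $2k\geq 6$, and any external shortcut of length $\leq k$ between them would create a cycle too short to be compatible with the tree structure of $B_k(v)$. With such $u,w$ in hand, I would BFS $G^k$ from $v$, placing $u,w$ first and $v$ last, and color in reverse BFS order; every vertex other than $v$ is then greedily colorable, and the Erd\H{o}s--Rubin--Taylor kernel method on $G^k\setminus\{v\}$ lets me force $u$ and $w$ to share a color from their lists. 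Consequently only $D(k,\Delta)-1$ distinct colors appear on $N_{G^k}(v)$, leaving a free color in $v$'s list of size $D(k,\Delta)-1$. For the online list-coloring strengthening advertised in the abstract, this plan should be replayed as a painter strategy that commits to color $v$ last and opportunistically merges the color choices of $u$ and $w$ whenever a common color is offered.

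The main obstacle will be establishing the existence of such non-$G^k$-adjacent $u,w$: although the tree structure of $B_k(v)$ provides many candidate pairs, ruling out external shortcuts of length $\leq k$ between \emph{every} candidate pair in the extremal case is delicate, and this is precisely where the hypothesis $k\geq 3$ is expected to be used essentially, via an analysis reminiscent of the non-existence of Moore graphs of diameter $\geq 3$ for $\Delta\geq 3$. A secondary difficulty is adapting the kernel-method color-merging step to the online painter game, where no offline view of the lists is available and a more constructive or potential-based argument is typically required.
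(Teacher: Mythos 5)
There are several genuine gaps here, and the most damaging one is arithmetic. Your vertex $v$ has $D(k,\Delta)$ neighbours in $G^k$ and a list of size only $D(k,\Delta)-1$, so merging the colours of a \emph{single} non-adjacent pair $u,w\in N_{G^k}(v)$ still allows $D(k,\Delta)-1$ distinct colours to appear on $N_{G^k}(v)$ --- exactly enough to exhaust $L(v)$. You must save \emph{two} constraints at $v$, which is why the paper works with four special vertices $x_1,x_2,y_1,y_2$ (Lemma~\ref{lem:x1y1}) forming two pairs, each pair contributing at most one constraint to $v$. A second fatal issue is the claim that in reverse BFS order from $v$ ``every vertex other than $v$ is greedily colourable'': a vertex $u\neq v$ of maximum degree $D(k,\Delta)$ in $G^k$ with only one neighbour later in the order can face $D(k,\Delta)-1$ constraints against a list of size $D(k,\Delta)-1$, so greedy fails. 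The paper avoids this by ordering by distance to a \emph{pair} of adjacent vertices $\{v,w\}$ and proving, via Observation~\ref{obs:constraints} and a case analysis on the three vertices following $u$ on a shortest path to $\{v,w\}$, that every ordinary vertex has two later $G^k$-neighbours.

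Third, ``the kernel method lets me force $u$ and $w$ to share a colour'' cannot work when $L(u)\cap L(w)=\emptyset$, which is entirely possible in list colouring; the paper's replacement is to guarantee that the pair implies at most one constraint on $v$ either by receiving a common colour or by one of them receiving a colour outside $L(v)$ (rules (ii)--(iv) and (vi)--(vii) of the algorithm), and the counting argument there ($|L(x_1)\cup L(y_1)|>|L(v)|$) is the real content. Finally, your reduction to the ``extremal ball'' case and the existence of a non-$G^k$-adjacent pair among the leaves is left open by your own admission; the paper instead reduces via minimum degree, girth $\geq 2k$, disjointness of $2k$-cycles, and diameter $\geq k+1$ (Lemma~\ref{lem:glob}), and then extracts the pair structure from a geodesic of length $k+1$ rather than from a single saturated ball. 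In short, the overall strategy (order towards a distinguished region, save colours there) is in the right spirit, but the quantitative bookkeeping and the list-disjointness obstruction are precisely where the work lies, and the proposal does not get past either.
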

In other words, the conjecture states that Theorem~\ref{th:brooks}, which would only yield the result for $D(k,\Delta)$, can be strengthened in the case of powers of graphs. The reason why $k=2$ is a special case is that there is no such thing as Moore graphs for higher powers (see Lemma~\ref{lem:diam} in Section~\ref{subsec:proof}).
We prove Conjecture~\ref{conj:gen} for $k \geq 3$.

\begin{theorem}\label{th:main}
For $k\geq 3$, the $k^{th}$ power of any graph is $(D(k,\Delta)-1)$-choosable. 
\end{theorem}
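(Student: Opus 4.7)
Let $N := D(k,\Delta)$ and $H := G^k$. Theorem~\ref{th:brooks} applied to $H$ already yields $\Delta(H)$-choosability as soon as $H$ is not a clique, so if $\Delta(H)\le N-1$ the result follows, modulo the clique case. The plan is therefore to handle two obstructions: (i) the possibility that $H$ is a clique on $\ge N$ vertices, and (ii) the possibility that $\Delta(H)=N$.

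For the clique case $H=K_m$: here $G$ has $m$ vertices and diameter $\le k$, and the Moore counting argument gives $m\le N+1$. Lemma~\ref{lem:diam} (no Moore graph of degree $\ge 3$ has diameter $\ge 3$) rules out $m=N+1$. The remaining case $m=N$ has to be excluded separately. A Moore-deficit of exactly $1$ forces $G$ to be $\Delta$-regular on $N$ vertices of diameter $\le k$ with essentially one permitted shortcut; in particular $N\Delta$ must be even. When $\Delta$ is odd one checks that $N=\Delta\sum_{i=0}^{k-1}(\Delta-1)^i$ is also odd (a product of an odd $\Delta$ and the odd sum $1+\text{even}+\cdots$), giving an immediate contradiction; when $\Delta$ is even, a refined count on where the deficit can sit relative to the BFS tree of any vertex contradicts diameter $\le k$.

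In the main case $\Delta(H)=N$, fix a vertex $v^\ast$ with $\deg_H(v^\ast)=N$. This saturation forces the ball $B_G(v^\ast,k)$ to be a complete $\Delta$-ary BFS tree of depth $k$, i.e.\ no cycle of length $\le 2k$ in $G$ passes through $v^\ast$. I would then locate a peripheral vertex $u$ at $G$-distance exactly $k$ from $v^\ast$ with $\deg_H(u)\le N-1$: setting $u$ aside, the Painter / list-colouring algorithm colours the remainder of $H$ using $N-1$ colours (applying Theorem~\ref{th:brooks} or induction to the strictly smaller instance), and one colour of $L(u)$ survives to extend the colouring to $u$. In the online setting the same argument is implemented as a Painter strategy on the BFS-from-$v^\ast$ ordering with leaves treated last.

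The main obstacle is the existence of such a peripheral $u$. If no vertex at depth $k$ had $H$-degree below $N$, then iteratively every leaf would itself be the centre of a complete local tree, forcing $G$ to have girth strictly greater than $2k$ \emph{and} diameter $\le k$. For $k\ge 3$ and $\Delta\ge 3$ this is precisely the configuration killed by Lemma~\ref{lem:diam} and by the Moore counting bound; this is where the hypothesis $k\ge 3$ is essential and why, in contrast, Conjecture~\ref{conj:cranstonkim} must carve out the Petersen graph when $k=2$.
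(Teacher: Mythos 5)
There is a genuine gap, and it sits exactly where the theorem is hard. Writing $H=G^k$ and $N=D(k,\Delta)$ as you do, your argument funnels everything into the case $\Delta(H)=N$, and there you assert that if no vertex of $H$-degree at most $N-1$ can be found, then $G$ is forced to have girth greater than $2k$ \emph{and} diameter at most $k$, whence a Moore-bound contradiction. The second conclusion is false. A vertex has $H$-degree exactly $N$ precisely when its $k$-ball in $G$ is a complete $\Delta$-regular tree of depth $k$, and $\Delta$-regular graphs of girth at least $2k+1$ exist in abundance (Erd\H{o}s--Sachs, random regular graphs); for any such $G$ the power $H$ is $N$-regular, no ``peripheral'' vertex $u$ exists, and the diameter is necessarily at least $k+1$ (since $|B_G(x,k)|=N+1$ but, there being no Moore graphs for $k\geq 3$, the graph has more than $N+1$ vertices). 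So no contradiction arises and this case is simply not handled. It is precisely the case the paper spends all its effort on: after Lemma~\ref{lem:glob} one assumes $G$ is $\Delta$-regular with $g(G)\geq 2k$, disjoint $2k$-cycles and diameter at least $k+1$, and the point is that Brooks applied to $H$ is then useless, since it only gives $N$. The paper's engine, which has no counterpart in your argument, is Lemma~\ref{lem:x1y1}: it produces two pairs $\{x_1,y_1\}$ and $\{x_2,y_2\}$ of mutually distant vertices lying inside the $k$-ball of a common vertex $v$, and a greedy ordering together with a Painter strategy colours each pair either identically or with colours harmless to $v$ and to a neighbour $w$ of $v$, saving the two constraints needed to finish at $v$ and the one needed at $w$.

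Even in the subcase where a vertex $u$ with $\deg_H(u)\leq N-1$ does exist, the reduction ``colour $H-u$ by Theorem~\ref{th:brooks} or by induction, then extend to $u$'' does not go through. Deleting $u$ lowers only the degrees of its $H$-neighbours, so $\Delta(H-u)$ is in general still $N$ and Brooks yields only $N$-choosability; and $H-u$ is not the $k$-th power of any graph of maximum degree $\Delta$, so the theorem cannot be invoked inductively on it. Moreover $\deg_H(u)\leq N-1$ together with $|L(u)|=N-1$ does not suffice to extend a colouring: you need at most $N-2$ distinct colours of $L(u)$ on the neighbourhood of $u$, i.e.\ either $\deg_H(u)\leq N-2$ or a guaranteed repetition, and nothing in your setup provides this. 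The clique case you treat is fine in spirit (it is the paper's Lemma~\ref{lem:diam}), but it is the easy part of the proof.
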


Independently, Conjecture~\ref{conj:gen} when $k=2$ (i.e. Conjecture~\ref{conj:cranstonkim}) has been proved recently by Cranston and Rabern~\cite{cr13}.

A generalization of list coloring, namely \emph{online list coloring}, was recently introduced independently by Schauz~\cite{s09} and Zhu~\cite{z09}. The graph $G$ is \emph{k-paintable} if, for every assignment of $k$ colors to each vertex, and for every order on the colors, there is an algorithm to color the graph by concealing until step $i$ which vertices contain the $i^{th}$ color in their lists, and deciding on the spot which vertices are colored in $i$ and which not (once colored a vertex cannot be uncolored). Clearly, online list coloring is stronger than list coloring. There exist graphs which are $k$-choosable but not $k$-paintable~\cite{s09}, though we do not know any $k$-choosable graph which is not $(k+1)$-paintable~\cite{clmptw13}. Brooks' theorem is also true in the case of online list coloring~\cite{hks10}.

We actually prove a stronger version of Theorem~\ref{th:main}, as follows.
\begin{theorem}\label{th:paint}
For $k\geq 3$, the $k^{th}$ power of any graph is $(D(k,\Delta)-1)$-paintable. 
\end{theorem}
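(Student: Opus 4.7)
The plan is to reduce to the paintable version of Brooks' theorem \cite{hks10}---every connected graph $H$ with $\Delta(H) \geq 3$ is $\Delta(H)$-paintable unless $H$ is a complete graph---aided by the structural Lemma~\ref{lem:diam} (the $k \geq 3$ analogue of the non-existence of Moore graphs) and a folklore deletion reduction for paintability: if $\deg_H(v) \leq d - 1$ then $H$ is $d$-paintable iff $H - v$ is. Iteratively applying this deletion to $G^k$ removes all vertices of $G^k$-degree at most $D(k,\Delta) - 2$, leaving a ``core'' $H \subseteq G^k$ whose minimum degree is at least $D(k,\Delta) - 1$. If the core is empty we are done; otherwise I would distinguish two cases.

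The easy case is $\Delta(H) \leq D(k,\Delta) - 1$: paintable Brooks' theorem immediately gives $(D(k,\Delta) - 1)$-paintability unless $H$ is a complete graph on $D(k,\Delta)$ vertices; this subcase would force a subgraph of $G$ of maximum degree at most $\Delta$ and diameter at most $k$ on $D(k,\Delta)$ vertices, which Lemma~\ref{lem:diam} is designed to rule out (odd cycles are also impossible since $\Delta(G^k) \geq \Delta \geq 3$). The hard case is $\Delta(H) = D(k,\Delta)$: then some vertex $v$ has $\deg_{G^k}(v) = D(k,\Delta)$, forcing $B_k(v)$ in $G$ to be a perfect $\Delta$-regular tree of depth $k$. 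The plan is to exploit this rigid local structure---significantly richer for $k \geq 3$ than for $k = 2$, in the spirit of the remark in the introduction about why $k=2$ is special---to either peel off one more reducible vertex by finding a pair of non-$G^k$-adjacent vertices in $B_k(v)$ whose presence lets an induction go through, or to construct a direct painter strategy that reserves one color using the depth of the tree.

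The main obstacle I anticipate is precisely this last case: paintability is strictly harder to handle than choosability because the painter does not see the full list assignment in advance, so the usual ``color the rest by induction, then fix up $v$'' trick of a typical Brooks-type improvement is not immediately available. The natural remedy is a kernel orientation of $G^k$, with maximum out-degree at most $D(k,\Delta) - 2$ and the property that every induced subdigraph admits a kernel, which is the standard certificate for $(D(k,\Delta) - 1)$-paintability; building such an orientation from a BFS-style ordering rooted at a maximum-degree vertex, and using the tree structure of $B_k(v)$ to control kernels locally, should be the technical heart of the proof.
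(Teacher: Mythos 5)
There are two genuine gaps here, and unfortunately they sit exactly where the work is. First, in your ``easy case'' the complete-graph exception of paintable Brooks is not ruled out by Lemma~\ref{lem:diam}. A copy of $K_{D(k,\Delta)}$ in the core is a set of $D(k,\Delta)$ vertices that are pairwise at distance at most $k$ \emph{in $G$}; the shortest paths realizing these distances may pass through vertices outside the set, so this does not yield a subgraph of $G$ of diameter at most $k$, and it certainly does not imply $\mathrm{diam}(G)\leq k$, which is what Lemma~\ref{lem:diam} actually addresses (the lemma bounds $|V(G)|$ when the whole graph has diameter at most $k$). Some counting argument might exclude such a clique for $k\geq 3$, but you have not supplied it. Second, and more seriously, your ``hard case'' is not an edge case --- it is the generic case. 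After the reductions of Lemma~\ref{lem:glob} one may assume $G$ is $\Delta$-regular with girth at least $2k$ and with disjoint $2k$-cycles, and then almost every vertex of $G^k$ has degree exactly $D(k,\Delta)$; the degeneracy peeling therefore removes essentially nothing, the core is essentially all of $G^k$ with $\Delta(H)=D(k,\Delta)$, and Brooks' theorem (paintable or not) gives only $D(k,\Delta)$, one more than needed. You explicitly defer this case to an unconstructed kernel orientation, i.e., the proof is not there.

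For comparison, the paper does not try to peel off a single low-degree vertex. Using $\mathrm{diam}(G)\geq k+1$ it finds (Lemma~\ref{lem:x1y1}) two far-apart edges $x_1x_2$ and $y_1y_2$, with $d(x_1,y_1)=k+1$ and $d(x_2,y_2)\geq k+1$, all four vertices lying within distance $k$ of a designated vertex $v$ on a shortest $x_1$--$y_1$ path (and $x_1,y_1$ within distance $k$ of a neighbor $w$ of $v$). The online strategy colors $x_1,x_2,y_1,y_2$ first according to explicit priority rules (color $x_1,y_1$ alike when possible, or with a color absent from $L(v)$ or $L(w)$, etc.), then colors all other vertices greedily by decreasing distance to $\{v,w\}$, ending with $w$ and then $v$; the distance ordering guarantees every intermediate vertex has two uncolored $G^k$-neighbors still to come, and the rules guarantee the pairs $\{x_1,y_1\}$ and $\{x_2,y_2\}$ each waste a constraint at $v$ (and $\{x_1,y_1\}$ at $w$). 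This hands-on bookkeeping, rather than a kernel or Alon--Tarsi certificate, is how the paintability (rather than mere choosability) is achieved. Your proposal would need a complete substitute for all of this to go through.
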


Similarly, Cranston and Rabern proved the case $k=2$ in the more general setting of list online coloring~\cite{cr13}.

We wonder whether the following stronger generalization of Conjecture~\ref{conj:cranstonkim} could be true: 
\begin{conjecture}
For any $k \in \mathbb{N}^*$, except for a finite number of graphs, the $k^{th}$ power of any graph is $(D(k,\Delta)+1-k)$-choosable.
\end{conjecture}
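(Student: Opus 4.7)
This conjecture strengthens Theorem~\ref{th:paint} by saving an additional $k-1$ colors and relaxing the exception class to a finite set of graphs. The plan is to push the tree-structure argument behind Theorem~\ref{th:paint} deeper into the $k$-ball around a maximum-degree vertex of $G^k$.

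I would proceed by induction on $k$, with the base case $k = 1$ being Brooks' theorem itself ($K_{\Delta+1}$ as the only exception for $\Delta \geq 3$). For the inductive step, suppose the conjecture holds for $k - 1$ and aim to prove it for $k$. If $\Delta(G^k) \leq D(k,\Delta) - k$, then the paintability version of Brooks' theorem~\cite{hks10}, or a mild refinement of it, already yields the conclusion. Otherwise, fix a vertex $v^*$ of high $G^k$-degree; a ball-counting argument forces the $k$-ball $B_k(v^*) \subseteq G$ to be very close to a complete $\Delta$-branching tree of depth $k$.

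The heart of the plan is to exploit $k$ non-edges of $G^k$ inside this tree and use them in parallel to absorb $k$ colors. In such a tree, two vertices at depth $d$ whose lowest common ancestor lies at depth $c$ are at $G$-distance $2(d-c)$, hence non-adjacent in $G^k$ precisely when $2(d-c) > k$; this yields an abundance of non-edges at every layer with $d - c > k/2$. I would select $k$ such non-edges in a ``pairwise independent'' way (one per depth layer, or otherwise spread) and, via a kernel-perfect orientation of $G^k$ respecting these non-edges, build an online list-coloring strategy that saves one color per chosen non-edge. The finitely many exceptions would correspond to graphs too symmetric near some maximum-degree vertex to support $k$ independent non-edges; since such graphs have bounded diameter (depending on $k$ and $\Delta$), finiteness is automatic.

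The main obstacle is the simultaneous orchestration of $k$ savings. Exploiting a single non-edge, as already required in Theorem~\ref{th:paint}, demands subtle online bookkeeping; performing $k$ such savings in parallel, with list sizes shrinking as $k$ grows, is substantially more delicate. A natural route would be to develop a Gallai-tree-type characterization of $(D(k,\Delta) - k + 1)$-paintable $k$-th powers, extending the Erd\H{o}s-Rubin-Taylor characterization of degree-choosable graphs to this more refined setting. A more speculative alternative is a probabilistic argument exploiting the exponentially many non-edges in $B_k(v^*)$ to show that a random $(D(k,\Delta) - k + 1)$-list-coloring succeeds with positive probability.
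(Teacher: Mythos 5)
This statement is not a theorem of the paper: it is the closing \emph{conjecture}, which the authors explicitly leave open (``We wonder whether the following stronger generalization \ldots could be true''). There is therefore no proof in the paper to compare your attempt against, and your proposal does not close the gap either --- it is a research plan whose central step you yourself flag as unresolved. The genuine missing idea is the orchestration of $k$ simultaneous savings \emph{at a single vertex}. In the paper's proof of Theorem~\ref{th:paint}, a last vertex $v$ needs two savings to compensate for lists of size $D(k,\Delta)-1$, and these are obtained from two pairs $\{x_1,y_1\}$ and $\{x_2,y_2\}$ of $G^k$-neighbors of $v$ that are pairwise far apart; already this requires the delicate online bookkeeping of rules (i)--(vii). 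To color from lists of size $D(k,\Delta)+1-k$ you would need $k$ savings concentrated on the final vertex (or cascading toward it along the elimination order), hence roughly $k$ pairwise-nonadjacent pairs all lying in the $G^k$-neighborhood of one vertex. Your ``$k$ independent non-edges, one per depth layer'' do not supply this: a non-edge between two vertices only lowers the constraint count of vertices adjacent in $G^k$ to \emph{both} of its endpoints, so savings scattered across $B_k(v^*)$ do not accumulate at $v^*$. Neither the existence of such a concentrated configuration nor the online strategy exploiting it is established by the sketch.

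Two further steps would fail as written. First, the claim that in a near-tree ball two vertices with lowest common ancestor at depth $c$ and depth $d$ are non-adjacent in $G^k$ ``precisely when $2(d-c)>k$'' is only an upper bound on distance: paths of $G$ may leave the ball and create shortcuts, so the asserted abundance of non-edges needs the girth and cycle-disjointness hypotheses that the paper extracts in Lemma~\ref{lem:glob}, and more. Second, the finiteness of the exceptional class is not ``automatic'': the candidate exceptions you describe are graphs that are locally tree-like and highly symmetric, and bounding their diameter (the analogue of Lemma~\ref{lem:diam}, which rests on the Moore-bound literature) is itself a nontrivial degree/diameter statement that would have to be proved, not assumed. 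In short, the proposal identifies plausible ingredients but leaves open exactly the difficulty that makes this a conjecture rather than a theorem.
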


\section{Proof of Theorem~\ref{th:main}}\label{sec:proof}
Let $k \geq 3$. Let $G$ be a graph. Let $M=D(k,\Delta)$. Note that $M \geq 21$ as $\Delta \geq 3$.

We will need the following lemma, which is essentially an easy adaptation of existing results (see Section~\ref{subsec:proof} for a proof).
\begin{lemma}\label{lem:glob}
If $G$ satisfies any of the following:
\begin{enumerate}
\item $G$ contains a vertex of degree smaller than $\Delta$.
\item $G$ contains a cycle shorter than $2k$.
\item $G$ contains two intersecting cycles of length $2k$.
\item diam$(G)\leq k$.
\end{enumerate}
Then $G^k$ is $(M-1)$-paintable.
\end{lemma}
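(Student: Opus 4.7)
The plan is to combine the paintability version of Brooks' theorem due to Hladky, Kral, and Schauz (every connected graph of maximum degree $\Delta$ other than $K_{\Delta+1}$ or an odd cycle is $\Delta$-paintable) with standard online-degeneracy arguments (if $V(G^k)$ admits an ordering under which every vertex has at most $M-2$ later neighbors, then $G^k$ is $(M-1)$-paintable). Since $M \geq 21$ for $k \geq 3$ and $\Delta \geq 3$, the odd-cycle exception never applies to $G^k$. In each case the aim is to either bound $\Delta(G^k)$ strictly below $M$ (ruling out $G^k = K_M$) or to exhibit such a vertex ordering; either conclusion yields $(M-1)$-paintability.

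For Case 1, let $v$ have $\deg_G(v) < \Delta$. A direct ball-count gives $\deg_{G^k}(v) \leq (\Delta-1)\cdot\frac{(\Delta-1)^k-1}{\Delta-2} = M - \sum_{i=0}^{k-1}(\Delta-1)^i \leq M-7$. More generally, any $w$ with $d_G(v,w) \leq k-1$ has its depth-$k$ BFS tree deficient at $v$'s position, forcing $\deg_{G^k}(w) < M$. A BFS ordering rooted at $v$, with outermost layers placed last in the order (i.e.\ painted first), then witnesses online-degeneracy at most $M-2$. Cases 2 and 3 follow an analogous template: a cycle of length $g < 2k$ (respectively two intersecting $2k$-cycles) creates BFS collisions that reduce $\deg_{G^k}$ at the involved vertices, and a suitably anchored ordering finishes the argument.

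For Case 4, $\operatorname{diam}(G) \leq k$ forces $G^k$ to be the complete graph on $n = |V(G)|$ vertices, so $(M-1)$-paintability reduces to the combinatorial claim $n \leq M-1$. The Moore bound gives $n \leq M+1$, and Lemma~\ref{lem:diam} rules out $n = M+1$ via Moore-graph non-existence for $k \geq 3$, $\Delta \geq 3$. The remaining borderline $n = M$ is excluded by a parity/structural analysis: in the non-trivial sub-case where Case 1 does not simultaneously hold, $G$ is $\Delta$-regular, the BFS tree from any vertex is forced to be near-perfect, and the unique shortage can be traced to a short cycle or to a configuration already handled by Cases 2--3.

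The main technical obstacle lies in Case 1: a single low-degree vertex reduces $G^k$-degrees only within its $(k-1)$-ball, so establishing the global online-degeneracy bound requires a careful propagation argument through the BFS layers, ensuring that outer-layer vertices painted first still have at most $M-2$ unpainted neighbors in $G^k$. The Case 4 borderline $n = M$ is the second delicate point, requiring sharper combinatorics than the bare Moore bound.
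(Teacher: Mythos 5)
Your overall architecture matches the paper's: a case analysis in which cases 1--3 are handled by distance-based orderings plus BFS-ball degree deficits (the ``online degeneracy'' fact you state is exactly the tool used), and case 4 is handled by Moore-type bounds. However, two of your steps are left as genuine gaps. First, you flag case 1 as needing ``a careful propagation argument through the BFS layers'' and do not supply it; in fact no propagation is needed, and this is worth closing correctly: for any vertex $u$ at distance at least $2$ from the low-degree vertex $v$, the two vertices following $u$ on a shortest $u$--$v$ path are at distance $1$ and $2\leq k$ from $u$ and strictly closer to $v$, hence are two $G^k$-neighbors of $u$ treated after $u$ in the order, giving $u$ at most $M-2$ constraints \emph{without any appeal to the degree deficit}. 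The deficit is only needed for $v$ itself (degree at most $M-\Delta$ in $G^k$) and for the neighbors of $v$ (degree at most $M-1$ in $G^k$, with $v$ still to come). Cases 2 and 3 are only gestured at; the missing content is the choice of \emph{two adjacent} anchor vertices $v,w$ on the offending cycle(s), the verification that $\deg_{G^k}(v)\leq M-2$ (and $\deg_{G^k}(w)\leq M-1$ in case 3), and the order ending with $w$ then $v$ so that $w$ picks up its second discount from $v$. These are routine but must be written.

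The more serious gap is your treatment of the borderline $n=M$ in case 4. Your claim that ``the unique shortage can be traced to a short cycle or to a configuration already handled by Cases 2--3'' does not go through: in a $\Delta$-regular graph of diameter at most $k$, girth at least $2k$ and order $M$, the unique BFS collision from each root is realized by a \emph{single} cycle of length exactly $2k$, which is excluded neither by case 2 (which requires length strictly less than $2k$) nor by case 3 (which requires two intersecting $2k$-cycles). One can push further (the $2k$-cycles would have to partition $V(G)$, so $2k$ must divide $M$), but this divisibility obstruction fails for some parameters (e.g.\ $\Delta=4$, $k=4$ gives $M=160$), so a genuine structural argument is still required. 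The non-existence of such ``defect-one'' graphs is a real theorem in the degree/diameter literature; the paper avoids reproving it by citing the Miller--\v{S}ir\'a\v{n} survey for the bound $n\leq M-1$ when $\Delta\geq 3$ and $k\geq 3$, after which $G^k$ is a clique on at most $M-1$ vertices and hence $(M-1)$-paintable. You should either cite that result explicitly or accept that your sketch does not yet constitute a proof of this sub-case.
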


Thus we can assume from now on that $G$ is $\Delta$-regular, with $g(G) \geq 2k$, diam$(G)\geq k+1$ and that the cycles of length $2k$ in $G$ are disjoint.

\begin{lemma}\label{lem:x1y1}
The graph $G$ contains two vertices $x_1$ and $y_1$ at distance $k+1$ from each other, with two neighbors $x_2,y_2$ (respectively) at distance at least $k+1$ from each other.
\end{lemma}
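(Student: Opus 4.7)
The plan is to split on the diameter of $G$. If $\operatorname{diam}(G)\geq k+2$, take a geodesic $u_0 u_1\cdots u_{k+2}$ realizing it, and set $x_1:=u_1$, $y_1:=u_{k+2}$, $x_2:=u_0$: the triangle inequality applied to the geodesic gives $d(x_1,y_1)=k+1$ and $d(x_2,y_1)=k+2$, so any $y_2\in N(y_1)$ satisfies $d(x_2,y_2)\geq k+1$, establishing the lemma in this case.

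The harder case is $\operatorname{diam}(G)=k+1$, which I would tackle by contradiction, assuming that for every pair $(x_1,y_1)$ at distance $k+1$ and every $x_2\in N(x_1), y_2\in N(y_1)$ one has $d(x_2,y_2)\leq k$. Fix such a pair. The diameter bound forces $d(x_2,y_1)\in\{k,k+1\}$ for each $x_2\in N(x_1)$, and symmetrically from $N(y_1)$. Moreover, if some $y_2\in N(y_1)$ had $d(y_2,x_1)=k+1$, then applying the failure hypothesis to $(x_1,y_2)$ together with $y_1\in N(y_2)$ would force every neighbor of $x_1$ to satisfy $d(x_2,y_1)\leq k$. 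Swapping $x_1\leftrightarrow y_1$ if needed, I may assume all $\Delta$ neighbors $y_2^1,\ldots,y_2^\Delta$ of $y_1$ lie at distance exactly $k$ from $x_1$, producing $\Delta$ shortest $x_1y_1$-paths $P_1,\ldots,P_\Delta$ of length $k+1$, one through each $y_2^j$.

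I would then extract a rigid structure from the hypotheses and play the failure hypothesis against it. Because $g(G)\geq 2k\geq 6$, distinct $y_2^j, y_2^{j'}$ cannot share a neighbor other than $y_1$ (else a $4$-cycle), and more generally two $P_j$'s meeting at an interior vertex produce a cycle of length $<2k$; sharing a first vertex in $N(x_1)$ creates a $2k$-cycle through $y_1$, and by the disjointness of $2k$-cycles at most one such coincidence is allowed (and no three $P_j$'s can share a first vertex), yielding $\Delta$ nearly-internally-disjoint paths in a nearly-bijective correspondence with $N(x_1)$. To finish, for any $n\in N(x_1)$ on some $P_n$ with $\ell(P_n)=y_2^{j(n)}$ and any $j'\neq j(n)$, the failure hypothesis and the triangle inequality together give $d(n,y_2^{j'})\in\{k-1,k\}$: the value $k-1$ would manufacture a new shortest $x_1y_1$-path through $n$ ending at $y_2^{j'}$, breaking the near-bijection, while the value $k$, accumulated across the $\Delta(\Delta-1)$ such pairs, should force either a cycle of length $<2k$ or a second $2k$-cycle through $x_1$ or $y_1$.

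The main obstacle is precisely this last step: explicitly exhibiting the forbidden short cycle or second $2k$-cycle from the rigid $P_j$-structure together with the $\Delta(\Delta-1)$ short cross-connections between off-path endpoints. The hypothesis $k\geq 3$ enters essentially through $g(G)\geq 6$, which is what rules out the $4$-cycles used at the very start of the structural analysis.
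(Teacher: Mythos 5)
Your first case ($\operatorname{diam}(G)\geq k+2$) is correct and immediate, but the substantive case $\operatorname{diam}(G)=k+1$ is not closed: you set up the $\Delta$ shortest $x_1y_1$-paths $P_1,\dots,P_\Delta$ and the near-bijection with $N(x_1)$, but the final contradiction is only asserted (``should force either a cycle of length $<2k$ or a second $2k$-cycle''), and you yourself flag this as the main obstacle. That step is genuinely the whole difficulty. The structure you have built is not obviously contradictory: a priori the $\Delta(\Delta-1)$ cross-distances $d(n,y_2^{j'})=k$ can be realized by paths that wander far from the $P_j$'s, so ``accumulating'' them does not by itself manufacture a short cycle or a second $2k$-cycle through $x_1$ or $y_1$; you would need to argue about where those paths branch, which is exactly the missing combinatorial content. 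Also note that your reduction to ``all neighbors of $y_1$ at distance exactly $k$ from $x_1$'' is not needed for the contradiction and somewhat obscures where the pigeonhole should be applied.

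The paper's proof avoids the diameter split entirely and closes the gap with a different, local pigeonhole. Assuming every neighbor of $x_1$ is within distance $k$ of every neighbor of $y_1$, fix one neighbor $z$ of $x_1$ and consider paths of length $\leq k$ from $z$ to the $\Delta$ neighbors of $y_1$. None of these paths may pass through $x_1$ (since $x_1$ is at distance $\geq k$ from each neighbor of $y_1$), so they leave $z$ through only $\Delta-1$ of its neighbors; by pigeonhole some neighbor $z'$ of $z$ carries two paths of length $\leq k-1$ to two distinct neighbors of $y_1$, giving a cycle through $y_1$ of length $\leq 2k$, hence exactly $2k$ with $z'$ antipodal to $y_1$. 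Repeating with a second neighbor $w$ of $x_1$ yields $w'$ antipodal to $y_1$ on a $2k$-cycle through $y_1$; disjointness of $2k$-cycles forces the two cycles to coincide, so $w'=z'$, and then $x_1,z,z',w$ form a $4$-cycle, contradicting $g(G)\geq 2k\geq 6$. I recommend replacing your second case by this argument: the pigeonhole at distance two from $x_1$ (rather than at $N(x_1)$ itself) is what turns the ``at most one collision'' observation into an actual contradiction.
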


\begin{proof}
Since diam$(G) \geq k+1$, $G$ contains two vertices $x_1$ and $y_1$ at distance $k+1$ from each other. Let us prove that $x_1$ has a neighbor $x_2$ and $y_1$ a neighbor $y_2$ such that $x_2$ and $y_2$ are at distance at least $k+1$ from each other. Assume for contradiction that each of the $\Delta$ neighbors of $x_1$ are at distance at most $k$ from each of the $\Delta$ neighbors of $y_1$. Let $z$ be a neighbor of $x_1$. Only $\Delta-1$ neighbors of $z$ can be part of a path of length at most $k$ containing a neighbor of $y_1$, as $x_1$ is itself at distance at least $k$ from all the neighbors of $y_1$. Therefore there is a neighbor $z'$ of $z$ that belongs to two paths of length at most $k-1$ to two different neighbors of $y_1$. This yields a cycle $C$ of length at most $2k$ containing $y_1$. The cycle $C$ is actually of length $2k$ and contains $z'$, as $z'$ is the endpoint of two different paths of length at most $k$ to $y_1$ and there is no cycle of length less than $2k$ by Lemma~\ref{lem:glob}. Consequently, $y_1$ and $z'$ are diametrically opposite on $C$. Let $w$ be another neighbor of $x_1$. By the same argument, a neighbor $w'$ of $w$ belongs to a cycle $C'$ of length $2k$ that contains $y_1$, and $w'$ is diametrically opposite to $y_1$ in $C'$. Then $C$ and $C'$ intersect on $y_1$, which by Lemma~\ref{lem:glob} implies that $C$ and $C'$ are actually the same cycle. Thus $w'$ and $z'$ are actually the same vertex. Now, $(w',w,x_1,z)$ is a cycle of length $4$, a contradiction to Lemma~\ref{lem:glob} and the fact that $k \geq 3$.
\end{proof}

We will describe an algorithm to online list color $G$. Let $L$ be a list assignment of $M-1$ colors to each vertex. Since we are in the case of online list coloring, the colors will be revealed one after another (at step $1$, we learn which vertices contain color $1$ in their list, and have to decide on the spot which will be colored in it, and so on). 

At any step of the algorithm, the number of \emph{constraints} of a vertex $v$ is the number of colors in $L(v)$ that appear on vertices at distance at most $k$ from $v$. Similarly, the number of constraints \emph{implied} on a vertex $v$ by a set $S$ is the number of colors in $L(v)$ that appear on vertices of $S$. Note that the number of constraints on a vertex $v$ is bounded by its degree in $G^k$, and that this upper bound is lowered by $1$ if two neighbors of $v$ in $G^k$ have the same color or if a neighbor of $v$ in $G^k$ either is not colored or its color does not belong to $L(v)$.

We consider four vertices $x_1, x_2, y_1$ and $y_2$ obtained from Lemma~\ref{lem:x1y1}. Let $P$ be a path of length $k+1$ between $x_1$ and $y_1$. Note that by definition of $x_2,y_2$, at most one of them is on $P$. Let $v$ be a vertex at distance least two on $P$ from both $x_1$ and $y_1$ (such a vertex exists since $P$ has length at least $4$), and let $w$ be a neighbor of $v$ on $P$ distinct from $x_2$ and $y_2$. Observe that $v$ is at distance at most $k$ from all of $x_1,x_2,y_1,y_2$ and $w$ is at distance at most $k$ from $x_1,y_1$. 
Our goal is to set an order on the vertices of $G$ such that by appropriately deciding at each step whether to color or not each vertex in that order, every vertex that is considered for coloring has at most $M-2$ constraints. 
The order we choose is $x_1,x_2,y_1,y_2$, followed by all other vertices by decreasing distance to $\{v,w\}$ (the distance to a set is the minimum of the distance to each element of the set). Ties are broken arbitrarily. The order ends with $w$ and then $v$. Let us now describe more precisely the coloring algorithm. For each new color $i$:
\begin{itemize}
 \item Treat the vertices $x_1, x_2, y_1$ and $y_2$ (in a way described a little bit further).
 \item Consider all the remaining vertices, one after the other according to the chosen order. When considering a vertex $u$, color it with $i$ if $i \in L(u)$ and no neighbor of $u$ in $G^k$ is colored with $i$.
\end{itemize}
The heart of the algorithm consists in making the right decision for $\{x_1,x_2,y_1,y_2\}$ at each step, so that $v$ and $w$ each have at most $M-2$ constraints when it comes to coloring them (note that $x_1,x_2,y_1$ and $y_2$ are all at distance at most $k$ from $v$ and $w$). Let us first prove that all the other vertices are colored at the end of the algorithm.

\begin{obs}\label{obs:constraints}
Let $u$ be an uncolored vertex (distinct from $x_1,x_2,y_1,y_2$). Let $r(u)$ be the number of neighbors of $u$ in $G^k$ which appear after $u$ in the order. The number of constraints for $u$ is at most $M-r(u)$.
\end{obs}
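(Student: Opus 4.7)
The plan is to split the $G^k$-neighbours of $u$ into two sets: $A$, those preceding $u$ in the chosen order (which in particular contains all of $x_1,x_2,y_1,y_2$, since these are handled first at every step), and $B$, those following $u$. Since $\deg_{G^k}(u)\leq M$, one has $|A|\leq M-|B|=M-r(u)$. Because the constraints of $u$ count distinct colours of $L(u)$ appearing on coloured vertices of $A\cup B$, it is enough to prove the following claim: every coloured vertex $z\in B$ either carries a colour outside $L(u)$ or shares its colour with some vertex of $A$.

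To establish the claim, I would fix the current step $i$ and consider a coloured $z\in B$ with colour $i'$. Because $z$ follows $u$ in the order, the algorithm cannot have processed $z$ at step $i$ yet, hence $i'<i$. At step $i'$ the algorithm first treats $x_1,x_2,y_1,y_2$ and then scans the remaining vertices in the fixed order; in particular $u$ is scanned before $z$. Since $u$ is still uncoloured at step $i$, it was a fortiori not coloured at step $i'$. The scanning rule then leaves only two possibilities: either $i'\notin L(u)$, or some $G^k$-neighbour $z'$ of $u$ was already bearing colour $i'$ at the instant $u$ was scanned.

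In the first case the colour of $z$ simply does not contribute to the constraint count of $u$. In the second case, $z'$ received colour $i'$ at step $i'$ before $u$ was scanned, so $z'$ is either one of $x_1,x_2,y_1,y_2$ or precedes $u$ in the scan; either way $z'\in A$, and $z'$ and $z$ share the same colour. This proves the claim, and counting at most one colour per vertex of $A$ yields at most $|A|\leq M-r(u)$ constraints on $u$.

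The argument is essentially bookkeeping; no step is more delicate than the inequality $|A|+|B|\leq M$ that uses $\deg_{G^k}(u)\leq M$. The one structural point worth emphasising is that $x_1,x_2,y_1,y_2$ are treated before the main scan at every step: this is precisely what forces any obstruction that blocked $u$ at step $i'$ to lie in $A$, and hence prevents any later neighbour $z\in B$ from contributing a colour not already accounted for by $A$.
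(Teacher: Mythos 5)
Your proof is correct and follows essentially the same argument as the paper: a coloured neighbour of $u$ appearing after $u$ either carries a colour not in $L(u)$ or shares its colour with an earlier neighbour that blocked $u$ at that step, so each late neighbour contributes no new constraint. The paper phrases this per-vertex rather than via the explicit $A/B$ partition and injection, but the idea is identical.
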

\begin{proof}
Let $y$ be a neighbor of $u$ in $G^k$ which appears after $u$. If $y$ is not colored, then $y$ does not imply a constraint on $u$. Assume that $y$ is colored with color $i$. Since $u$ is uncolored, it means that when we tried to color $u$ with $i$, we did not succeed. So either color $i$ does not appear in $L(u)$, and then $i$ does not imply a constraint on $u$. Or another neighbor $y'$ of $u$, which appears before $u$ in the order, was colored with $i$ and then $\{y,y'\}$ implies only one constraint on $u$.
\end{proof}

Let us first prove that every vertex $u \notin \{x_1,x_2, y_1,y_2,v,w\}$ is colored at the end of the coloring algorithm (whatever the choices we did for $x_1,x_2, y_1,y_2$). Let us prove that $u$ has at most $M-2$ constraints i.e. $u$ can be colored since $|L(u)|=M-1$:
\begin{itemize}
\item If $u$ is at distance at most $k$ from both $v$ and $w$, then both $v$ and $w$ are adjacent to $u$ in $G^k$. Since they are after $u$ in the order, the result holds by Observation~\ref{obs:constraints}. 
\item If $u$ is at distance at least $k+1$ from $v$ or $w$, let $P$ be a shortest path from $u$ to $\{v,w\}$. Assume w.l.o.g. that $P$ is a shortest path from $u$ to $v$. 
Let $z_1$, $z_2$ and $z_3$ be the three vertices consecutive to $u$ in $P$. These vertices exist since $d(u,v) \geq k \geq 3$.\\
If $\{z_1,z_2,z_3\} \cap  \{x_1,x_2, y_1,y_2\}$ has size at most one, then at least two of $\{z_1,z_2,z_3\}$ are after $u$ in the order, hence the result by Observation~\ref{obs:constraints}. \\
Otherwise, at least two of  $\{z_1,z_2,z_3\}$ are in $\{x_1,x_2, y_1,y_2\}$. Since $d(x_1,y_1)\geq k+1$, if $x_1 \in \{z_1,z_2,z_3\}$ then none of $y_1,y_2$ is in this set. The same holds for $x_2$. We may assume w.l.o.g. that the intersection is exactly $x_1,x_2$. Let $w_1$ be another neighbor of $z_2$. Note that $w_1$ is neither $y_1$ nor $y_2$. Moreover $d(w_1,v)<d(u,v)$ since $P$ is a minimum path. So $w_2$ appears after $u$ in the order and $d(w_2,u) \leq k$. Two vertices at distance at most three from $u$ are after $u$ in the order, so $u$ has at most $|M|-2$ constraints.
\end{itemize}
Now, let us argue that there is a coloring of $\{x_1,x_2,y_1,y_2\}$ that ensures that $v$ and $w$ will be colored.

In standard vertex coloring, we set $x_1$ and $y_1$ to color $1$, and $x_2$ and $y_2$ to color $2$: then vertices $v$ and $w$ each have at most $M-2$ colors appearing on their neighborhood in $G^k$. So they each have at most $M-2$ constraints and then both $v$ and $w$ are colored at some step of the algorithm.

Since we are considering online list coloring, the procedure is slightly more complicated, though the idea remains the same. We want to make sure that the coloring of $\{x_1,y_1\}$ ensures that $v$ and $w$ both have one less constraint, and the coloring of $\{x_2,y_2\}$ ensures that $v$ has one less constraint. Thus when we consider $w$, it has one less constraint by $\{x_1,y_1\}$ and one less by $v$ (since $w$ is before $v$ in the order), and then $v$ has two less constraints by $\{x_1,y_1,x_2,y_2\}$. 

We proceed as follows. We denote by $NO(v)$ (resp. $NO(w)$) the number of elements of $\{x_1,y_1\}$ that are colored, minus the number of constraints implied on $v$ (resp. $w$) by elements of this set. For example, if $x_1$ and $y_1$ are colored the same, then $NO(v)=1$. The value $NO$ roughly denotes the number of colored vertices in $\{x_1,y_1\}$ which do not create a constraint. For simplicity, we consider $L(u)$ to be empty once $u$ is colored. 
At the beginning of each step $c$, we check the following:
\begin{enumerate}[(i)]
\item\label{1:same} If $c$ belongs to $L(x_1) \cap L(y_1)$, then color both $x_1$ and $y_1$ in $c$.
\item\label{1:v} If $c$ belongs to $L(x_1)$ or $L(y_1)$ but not to $L(v)$, and $NO(v)=0$, then color $x_1$ or $y_1$ in $c$.
\item\label{1:w} If $c$ belongs to $L(x_1)$ or $L(y_1)$ but not to $L(w)$, and $NO(w)=0$, then color $x_1$ or $y_1$ in $c$.
\item\label{1:end} If $c$ belongs to $L(x_1)$ or $L(y_1)$, when $M-2$ colors for the corresponding vertex have already been revealed, then color it in $c$.\newline
\item\label{2:same} If $c$ belongs to $L(x_2)\cap L(y_2)$, then color both $x_2$ and $y_2$ in $c$.
\item\label{2:v} If $c$ belongs to $L(x_2)$ or $L(y_2)$ but not to $L(v)$, then color $x_2$ or $y_2$ in $c$.
\item\label{2:end} If $c$ belongs to $L(x_2)$ or $L(y_2)$, when at least $M-4$ colors for the corresponding vertex have already been revealed, then color it in $c$.
\end{enumerate}

It remains to prove that this yields a coloring of $\{x_1,x_2,y_1,y_2\}$ such that $v$ and $w$ can be colored.

Let us first justify that $x_1$ and $y_1$ are colored in the desired way (i.e. for both $v$ and $w$, the set $\{x_1,y_1\}$ implies at most one constraint). If $x_1$ and $y_1$ are colored the same, the goal is reached. If the lists $L(x_1)$ and $L(y_1)$ have no color in common, since $|L(x_1) \cup L(y_1)|>|L(v)|$, at least one of them can be colored by~(\ref{1:v}) or~(\ref{1:w}). Then at least one of $x_1$ and $y_1$ is colored in $c \not\in L(v) \cap L(w)$, assume w.l.o.g. that $x_1$ is colored that way (and is the first if $x_1$ and $y_1$ both are). If $c \not\in L(v)\cup L(w)$, the vertex $y_1$ is never colored at~(\ref{1:v}) nor~(\ref{1:w}) ($NO(v)=NO(w)=1$), but it is colored at~(\ref{1:end}). If $c \in L(v) \cup L(w)$, assume w.l.o.g. that $c \in  L(v) \setminus L(w)$. Then, since $y_1$ was not colored before,~(\ref{1:v}) did not apply, which means that every color that belonged to $L(y_1)$ belonged to $L(v)$. But $c \in L(v) \setminus L(y_1)$ (otherwise $x_1$ and $y_1$ would be colored the same by (\ref{1:same})). Thus there remain more colors available for $y_1$ than for $v$, and~(\ref{1:v}) will eventually apply (remember that~(\ref{1:end}) does not apply before there is exactly one color left available for $y_1$).

Now, let us justify that $x_2$ and $y_2$ are colored in the desired way, i.e. if the two are colored then the set $\{x_2,y_2\}$ implies at most one constraint on $v$.
If~(\ref{2:same}) or~(\ref{2:v}) applies, the goal is reached. Let us now prove two things: that one of~(\ref{2:same}) and~(\ref{2:v}) always applies, and that both $x_2$ and $y_2$ are colored.\\
Assume that neither (\ref{2:same}) nor (\ref{2:v}) apply on $x_2$ or $y_2$. Then~(\ref{2:end}) eventually applies as only at most two colors (the colors of $x_1$ and $y_1$) may not reach (\ref{2:same})-(\ref{2:end}) and we color $x_2$ (resp. $y_2$) as soon as it has at most $2$ colors yet to be revealed (ie, if $x_2$ was not colored then it had at least $3$ colors yet to be revealed). Thus $x_2$ and $y_2$ are both colored at the end.\\
Assume that both $x_2$ and $y_2$ are colored at (\ref{2:end}). Then (\ref{2:same}) never applied, which implies $|L(x_2) \cap L(y_2)| \leq 2$. Consequently, $|L(x_2) \cup L(y_2)| \geq 2\times(M-1)-2$. Since $M\geq 21$ and $|L(v)|=M-1$, it follows that $L(x_2)\cup L(y_2)$ contains at least $18$ colors that do not belong to $L(v)$. This is a contradiction to the fact that~(\ref{2:v}) never applied.

\subsection{Proof of Lemma~\ref{lem:glob}}\label{subsec:proof}

In the following section, we prove that the different items of Lemma~\ref{lem:glob} holds. All the proofs are based on coloring algorithms based on distance, just like the proof of Theorem~\ref{th:main}. However, here we do not have to treat vertices differently: it suffices to choose an appropriate order. The resulting proofs are thus much simpler.

\begin{lemma}\label{lem:regular}
If $G$ contains a vertex of degree $\leq \Delta-1$, then $G^k$ is $(M-1)$-paintable.
\end{lemma}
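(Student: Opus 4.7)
The plan is to reuse the online greedy algorithm from the proof of Theorem~\ref{th:main}, but without any special treatment of selected vertices: as flagged in the introduction of Section~\ref{subsec:proof}, for these sub-lemmas it suffices to pick the right ordering. Let $u$ be a vertex of $G$ with degree at most $\Delta-1$, and order the vertices of $G$ as $v_1,\dots,v_n$ by non-increasing distance from $u$ in $G$, breaking ties arbitrarily, so that $v_n=u$. For each color $c$ revealed online, I scan the vertices in this order and color $v_i$ with $c$ as soon as $c\in L(v_i)$ and no neighbor of $v_i$ in $G^k$ already bears color $c$.

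By Observation~\ref{obs:constraints} (whose proof is oblivious to the specific ordering used, since the paper's special vertices $x_1,x_2,y_1,y_2$ play no role in it), any vertex $v$ left uncolored at the end must satisfy $\deg_{G^k}(v)-r(v)\ge M-1$, where $r(v)$ counts the neighbors of $v$ in $G^k$ that appear after $v$ in the order. So it is enough to verify $\deg_{G^k}(v)-r(v)\le M-2$ for every $v$, which I split by cases on $d(u,v)$.

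If $d(u,v)\ge 2$, take a BFS parent $p$ of $v$ (a neighbor at distance $d(u,v)-1$ from $u$) and a grandparent $p'$ (a neighbor of $p$ at distance $d(u,v)-2$ from $u$). Both $p$ and $p'$ lie strictly closer to $u$ than $v$, hence they appear after $v$ in the chosen order, and since $k\ge 3\ge 2$ they are at distance at most $k$ from $v$ and therefore neighbors of $v$ in $G^k$. Thus $r(v)\ge 2$ and this case is done.

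If $d(u,v)\le 1$, I bound $\deg_{G^k}(v)$ directly by unfolding the BFS tree rooted at $v$ up to depth $k$: because $u$ has only $\Delta-1$ neighbors, the branch of that tree passing through $u$ is strictly thinner than in the $\Delta$-regular case, and a short arithmetic-series computation shows that the resulting shortfall compared to $M=D(k,\Delta)$ is at least $\tfrac{(\Delta-1)^{k-1}-1}{\Delta-2}\ge \Delta\ge 3$ whenever $k\ge 3$ and $\Delta\ge 3$. Hence $\deg_{G^k}(v)\le M-3\le M-2$, irrespective of $r(v)$. The main obstacle is really this last ball-size bookkeeping: the worst case — when $v$ is a $\Delta$-regular neighbor of $u$ and the $k$-ball around $v$ is tree-like — is the one where the gap could a priori be too small, and the estimate above confirms it still leaves a margin of at least $2$ for every admissible $(k,\Delta)$.
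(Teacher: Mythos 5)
Your proof is correct and follows essentially the same route as the paper: order by non-increasing distance to the low-degree vertex, run the plain online greedy algorithm, and note that every vertex at distance at least $2$ has two later $G^k$-neighbors on a shortest path to $u$. The only (harmless) difference is at distance $\le 1$, where the paper combines the weaker bound $\deg_{G^k}(w)\le M-1$ with the fact that $u$ comes after $w$, while you use $k\ge 3$ to get the pure degree bound $\deg_{G^k}\le M-3$; both computations are valid.
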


\begin{proof}
Assume $G$ contains a vertex $v$ with $d(v) \leq \Delta - 1$. Since $G$ is connected, the distance to $v$ is well-defined. Order the vertices by decreasing order to $v$. At each step of the algorithm, color the vertices by decreasing distance to $v$. Every vertex $x$ at distance at least two from $v$ has at least two neighbors which are not constraints (indeed the vertices on a shortest path from $x$ to $v$ are considered after the vertex $x$ in the order). For every vertex $w$ which is a neighbor of $v$, since $k \geq 2$, the degree of $w$ in $G^k$ is at most $M-1$. Moreover, the vertex $v$ is considered after the vertex $w$ in the order, so the vertex $w$ can be colored. Since $k\geq 2$, $\Delta \geq 3$ and $d(v)\leq \Delta-1$, the degree of $v$ in $G^k$ is at most $M-\Delta < M-1$, so $v$ can be colored.
\end{proof}

Thus we assume from now on that $G$ is $\Delta$-regular.

\begin{lemma}\label{lem:shortcycle}
If $g(G) < 2k$, then $G^k$ is $(M-1)$-paintable.
\end{lemma}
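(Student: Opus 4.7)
The plan is to imitate the strategy of Lemma~\ref{lem:regular}: fix a root, order vertices by decreasing distance to it, and argue via the proof of Observation~\ref{obs:constraints} that each vertex has enough later $G^k$-neighbors. The short cycle will play the role that the low-degree vertex played in Lemma~\ref{lem:regular}. Concretely, let $C$ be a cycle of length $\ell < 2k$ in $G$, fix $v_1 \in C$ and a neighbor $v_2$ of $v_1$ on $C$, and order the vertices by decreasing distance to $v_1$, breaking ties arbitrarily except that $v_2$ is placed immediately before $v_1$ at the very end of the order. Run the natural online algorithm: at each revealed color, process the vertices in this order and paint each one whenever possible.

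The argument in the proof of Observation~\ref{obs:constraints} in fact yields the sharper bound that an uncolored $u$ has at most $\deg_{G^k}(u) - r(u)$ constraints, where $r(u)$ counts later $G^k$-neighbors of $u$. It therefore suffices to check $\deg_{G^k}(u) - r(u) \leq M - 2$ for every $u$. If $d(u, v_1) \geq 2$, the two vertices immediately following $u$ on a shortest $u v_1$-path are closer to $v_1$ and hence later in the order, and both lie within $G^k$-distance of $u$ (since $k \geq 2$), giving $r(u) \geq 2$. If $d(u, v_1) = 1$ and $u \neq v_2$, then both $v_1$ and $v_2$ are later and both within $G^k$-distance of $u$, so again $r(u) \geq 2$. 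In either case, $\deg_{G^k}(u) \leq M$ gives the required bound.

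The two critical cases are $u = v_2$ (with $r(v_2) = 1$) and $u = v_1$ (with $r(v_1) = 0$); they require $\deg_{G^k}(v_2) \leq M - 1$ and $\deg_{G^k}(v_1) \leq M - 2$ respectively. Both follow from a single claim: any vertex $v$ on a cycle of length $\ell < 2k$ satisfies $\deg_{G^k}(v) \leq M - 2$. Indeed the BFS from $v$ wraps around $C$ before reaching depth $k$: for $\ell = 2m + 1$ odd (so $m \leq k - 1$), the two cycle-antipodes at depth $m$ are adjacent on $C$, so each contributes only $\Delta - 2$ new children at depth $m + 1 \leq k$ instead of $\Delta - 1$, saving $2$ compared with the $\Delta$-regular tree bound; for $\ell = 2m$ even (so $m \leq k - 1$), the antipode is identified at depth $m < k$ (saving $1$) and its $\Delta - 2$ non-cycle neighbors contribute a further saving of $\Delta \geq 3$ at depth $m + 1 \leq k$. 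The main obstacle is exactly this BFS-counting claim; once it is in hand, the rest is the same bookkeeping as in Lemma~\ref{lem:regular}.
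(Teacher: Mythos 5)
Your proof is correct and follows essentially the same route as the paper's: order the vertices by decreasing distance to a pair of adjacent vertices on the short cycle, finish with those two, and use the fact that a vertex on a cycle of length less than $2k$ has degree at most $M-2$ in $G^k$ (a fact the paper simply asserts, whereas you supply the BFS layer count). One small point: your computation places the cycle antipodes at depth exactly $m$, which implicitly assumes distances from $v$ to them are realized along $C$; this is guaranteed if you take $C$ to be a shortest cycle of $G$ (of length $g(G)<2k$ by hypothesis, and isometric), after which the argument is complete.
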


\begin{proof}
Assume $G$ contains a cycle $C$ of length at most $2k-1$. Let $v$ and $w$ be two adjacent vertices on $C$. Since $C$ is of length at most $2k-1$, the degree of $v$ and $w$ in $G^k$ is less than $M-1$. Then, at each color step, we color as many vertices as possible, by decreasing distance to $\{v,w\}$ and ending with $v$ and $w$.
\end{proof}

Thus we assume from now on that $g(G)\geq 2k$.

\begin{lemma}\label{lem:shortcycles}
If $G$ contains two intersecting cycles of length $2k$, then $G^k$ is $(M-1)$-paintable.
\end{lemma}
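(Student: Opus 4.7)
The plan is to mimic the orderings used in Lemmas~\ref{lem:regular} and~\ref{lem:shortcycle}: exhibit an adjacent pair $v,w$ with $\deg_{G^k}(v) \leq M-2$ and $\deg_{G^k}(w) \leq M-1$, then order the vertices of $G$ by decreasing distance to $\{v,w\}$, ending with $w$ and then $v$, and run the greedy online colouring. By exactly the analysis of Observation~\ref{obs:constraints}, the number of constraints on any uncoloured vertex $u$ when it is processed is at most $\deg_{G^k}(u) - r(u)$, where $r(u)$ counts the $G^k$-neighbours of $u$ appearing later in the order, so it is enough to arrange that this bound is at most $M-2$ everywhere.

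The vertex $v$ will be chosen as any element of $C_1 \cap C_2$ and $w$ as a neighbour of $v$ on $C_1$. Since $w$ lies on the $2k$-cycle $C_1$, its antipode on $C_1$ is reached from $w$ by two internally disjoint paths of length $k$, so $\deg_{G^k}(w) \leq M-1$. For $v$, let $a_i$ denote its antipode on $C_i$. If $a_1 \neq a_2$, each of them is reached from $v$ by two distinct length-$k$ paths, producing two independent collisions in the depth-$k$ BFS expansion from $v$ and hence $\deg_{G^k}(v) \leq M-2$. If instead $a_1 = a_2 = a$, then $a \in C_1 \cap C_2$, and the four halves of $C_1 \cup C_2$ supply, since $C_1 \neq C_2$, at least three distinct length-$k$ paths from $v$ to $a$; by the girth hypothesis $g(G) \geq 2k$ inherited from Lemma~\ref{lem:shortcycle}, two length-$k$ paths between the same pair of vertices cannot share an internal vertex (otherwise two subpaths would create a cycle of length strictly less than $2k$), so these paths are pairwise internally disjoint and $a$ appears with multiplicity at least three in the BFS, again yielding $\deg_{G^k}(v) \leq M-2$.

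Once the bounds on $\deg_{G^k}(v)$ and $\deg_{G^k}(w)$ are in hand, the verification that every vertex has at most $M-2$ constraints is routine. The vertices $v$ and $w$ are handled by their degree bounds and the fact that $v$ still comes after $w$ in the order. For $u \notin \{v,w\}$ we check $r(u) \geq 2$: when $d(u,\{v,w\}) \geq 2$, the two vertices immediately following $u$ on a shortest path toward $\{v,w\}$ are $G^k$-neighbours of $u$ (at distances $1$ and $2 \leq k$) that appear later in the order; when $d(u,\{v,w\}) = 1$, both $v$ and $w$ themselves lie within $G^k$-distance $2$ of $u$ and come later. The main obstacle is the case $a_1 = a_2$ of the degree bound on $v$: one must invoke the girth hypothesis both to guarantee the existence of a third distinct length-$k$ path from $v$ to $a$ and to ensure that these paths are internally vertex-disjoint, so that the triple multiplicity at $a$ translates into two genuine collisions in the BFS count; everything else is a direct transcription of the order-and-greedy scheme used earlier in Section~\ref{subsec:proof}.
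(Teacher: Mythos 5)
Your proposal is correct and follows essentially the same route as the paper: choose $v$ in the intersection of the two $2k$-cycles and $w$ a neighbour of $v$ on one of them, observe $\deg_{G^k}(v)\leq M-2$ and $\deg_{G^k}(w)\leq M-1$, and colour greedily by decreasing distance to $\{v,w\}$, ending with $w$ and then $v$. The paper states the degree bounds without justification; your case analysis on the antipodes (including the observation that $C_1\neq C_2$ forces at least three distinct length-$k$ paths when the antipodes coincide) is a valid filling-in of that step, though the internal disjointness of the paths is not actually needed for the walk-counting argument.
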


\begin{proof}
Assume $G$ contains a vertex $v$ belonging to two cycles of length $2k$. Let $w$ be a neighbor of $v$ on one cycle of length $2k$. Vertex $v$ has degree at most $M-2$ in $G^k$, and $w$ at most $M-1$. Then, at each color step, we color as many vertices as possible, by decreasing distance to $\{v,w\}$ and ending with $w$ and then $v$.
\end{proof}

Thus we assume from now on that the cycles of length $2k$ in $G$ are disjoint.

\begin{lemma}\label{lem:diam}
If diam$(G) \leq k$ then $G^k$ is $(M-1)$-paintable. 
\end{lemma}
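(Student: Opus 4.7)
The plan is to use that diam$(G) \leq k$ makes $G^k$ equal to the complete graph $K_n$ on $n := |V(G)|$ vertices, and that $K_n$ is $(M-1)$-paintable if and only if $n \leq M-1$. The lemma thus reduces to the Moore-type bound $|V(G)| \leq M - 1$, to be derived from our standing hypotheses ($G$ is $\Delta$-regular with girth at least $2k$ and pairwise vertex-disjoint $2k$-cycles).

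To get this bound I would fix an arbitrary vertex $v$ and analyse the BFS layers $B_i := \{u : d(u,v) = i\}$. The girth condition forbids two distinct shortest paths from $v$ of length at most $k-1$ sharing an internal vertex (that would close a cycle of length at most $2k-2$), so $|B_i| = \Delta(\Delta-1)^{i-1}$ for $1 \leq i \leq k-1$. At depth $k$, a vertex $u$ with $p \geq 2$ neighbours in $B_{k-1}$ lies on $\binom{p}{2}$ distinct $2k$-cycles through $v$ (one for each pair of parents of $u$), and two distinct such ``double-parent'' vertices would likewise produce two distinct $2k$-cycles sharing $v$; disjointness of $2k$-cycles therefore forces $p \leq 2$ and at most one double-parent vertex, yielding $|V(G)| \leq 1 + M$. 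This leaves three sub-cases to dispose of: if diam$(G) \leq k-1$, the BFS stops at layer $k-1$ and $|V(G)| \leq 1 + M - \Delta(\Delta-1)^{k-1} \leq M - 11$ (using $\Delta(\Delta-1)^{k-1} \geq 12$ for $\Delta, k \geq 3$); if $|V(G)| = M+1$ then $G$ is a Moore graph of diameter $k$, degree $\Delta$, and girth $2k+1$, which classically does not exist for $k \geq 3$, $\Delta \geq 3$ (Damerell, Bannai--Ito); if $|V(G)| = M$ then BFS from every vertex finds exactly one $2k$-cycle through it, so every vertex lies on a $2k$-cycle, and disjointness forces these cycles to partition $V(G)$, whence $2k \mid M$.

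The main obstacle is this last sub-case. In many parameter ranges the divisibility obstruction $2k \mid M$ already fails --- for instance, $\Delta = 3$ gives $M = 3(2^k - 1)$ odd, so $2 \nmid M$ --- yielding an immediate contradiction; but for parameters such as $(\Delta, k) = (4, 4)$ or $(6, 3)$ where $2k$ does divide $M$, a finer structural argument is required, typically analysing how the $\Delta - 2$ off-cycle edges at each vertex of one $2k$-cycle must land in the other $2k$-cycles, and deriving a contradiction with either the girth, the disjointness, or the diameter-$k$ constraint to conclude $|V(G)| \leq M - 1$.
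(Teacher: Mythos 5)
Your reduction is the same as the paper's: since $\mathrm{diam}(G)\leq k$ makes $G^k$ a complete graph, the lemma is equivalent to the vertex-count bound $|V(G)|\leq M-1$, and the paper disposes of that bound in one line by citing the degree/diameter survey of Miller and \v{S}ir\'a\v{n}. Your BFS-layer analysis correctly establishes $|V(G)|\leq M+1$, and your treatment of the sub-cases $\mathrm{diam}(G)\leq k-1$ and $|V(G)|=M+1$ (no Moore graphs of degree and diameter at least $3$, by Damerell and Bannai--Ito) is fine. The divisibility observation in the case $|V(G)|=M$ is also correct as far as it goes: exactly one vertex of the $k$-th BFS layer has two parents, so each vertex lies on exactly one $2k$-cycle, and disjointness of $2k$-cycles forces $2k\mid M$, which kills $\Delta=3$ and many other parameter pairs.

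However, the proof has a genuine gap exactly where you flag it: for the infinitely many pairs $(\Delta,k)$ with $2k\mid M$ (such as $(4,4)$ and $(6,3)$), you do not rule out $|V(G)|=M$; the sentence promising ``a finer structural argument'' is a placeholder for the hard part of the claim. What you are trying to prove there is the nonexistence of defect-one graphs (graphs with one vertex fewer than the Moore bound) for degree and diameter at least $3$, a nontrivial theorem of Bannai--Ito and of Kurosawa--Tsujii whose proof is algebraic (eigenvalue multiplicities), not a local edge-counting argument; it is unlikely to fall to an analysis of how the off-cycle edges distribute among the $2k$-cycles. The fix is either to carry out that argument in full or, as the paper does, to cite the known bound $|V(G)|\leq M-1$ from the degree/diameter literature, which subsumes both your Moore-graph sub-case and your defect-one sub-case at once.
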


\begin{proof}
Assume diam$(G) \leq k$. Then $G$ contains at most $M(k,\Delta)+1$ vertices, and $G^k$ is a clique. By~\cite{ms05}, the graph $G$ contains at most $M(k,\Delta)-1$ vertices, hence the result.
\end{proof}

\section{Acknowledgements}\label{sect:ack}
The authors would like to thank Daniel Cranston for suggesting the generalization to paintability.

\bibliographystyle{plain}

%%%%%%%%%%%%%%%%%%%%%%%%%%%%%%%%%%%%%%%%%%%%%%%%%%%%%%%%%%%%%%%%%%%%%%%%
\end{document}